\theoremstyle{plain}
\newtheorem{theorem}{Theorem}
\newtheorem{lemma}{Lemma}
\newtheorem{assumption}{Assumption}
\crefname{equation}{}{} 
\crefname{section}{Sec.}{Sec.}
\newcommand{\T}{\mathrm{T}}                     
\newcommand{\ti}{\ensuremath{n}}           
\newcommand{\fctr}{\ensuremath{f_\pi}}
\newcommand{\gctr}{\ensuremath{g_\pi}}
\newcommand{\policy}{\ensuremath{\pi(x)}}
\newcommand{\argmax}{\operatornamewithlimits{argmax}}
\newif\ifextended
\newif\ifconferenceversion
\newcommand{\extendedversion}[1]{\ifextended #1 \fi}
\newcommand{\conferenceversion}[1]{\ifconferenceversion #1 \fi}
\title{\LARGE \bf
Safe Learning of Regions of Attraction for Uncertain, Nonlinear Systems with Gaussian Processes
}
\author{Felix Berkenkamp, Riccardo Moriconi, Angela P. Schoellig, and Andreas Krause
\thanks{Felix Berkenkamp and Andreas Krause are with the Learning \& Adaptive Systems Group (\url{www.las.ethz.ch}), Department of Computer Science, ETH Zurich, Switzerland.
Email: \mbox{\{befelix, krausea\}@ethz.ch}}%
\thanks{Riccardo Moriconi is with the Department of Mechanical Engineering, ETH Zurich, Switzerland. Email: \mbox{rmoricon@ethz.ch}}%
\thanks{Angela P. Schoellig is with the Dynamic Systems Lab (\url{www.dynsyslab.org}), Institute for Aerospace Studies, University of Toronto, Canada. Email: \mbox{schoellig@utias.utoronto.ca}}%
\thanks{This research was supported by SNSF grant {200020\_159557}, NSERC grant {RGPIN-2014-04634}, and the Connaught New Researcher Award.}%
}%
\newcommand{\mytitle}{\textbf{Appeared in}
\textit{Proc. of the IEEE Conference on Decision and Control, 2016, pp. 4661 -- 4666, doi: \href{https://doi.org/10.1109/CDC.2016.7798979}{10.1109/CDC.2016.7798979} }.\\[0.7em]
\copyright 2016 IEEE. Personal use of this material is permitted. Permission from IEEE must be obtained for all other uses, in any current or future media, including reprinting/republishing this material for advertising or promotional purposes, creating new collective works, for resale or redistribution to servers or lists, or reuse of any copyrighted component of this work in other works.}
\begin{document}

\maketitle
\thispagestyle{fancy}
\pagestyle{empty}


\begin{abstract}
Control theory can provide useful insights into the properties of controlled, dynamic systems. One important property of nonlinear systems is the region of attraction (ROA), a safe subset of the state space in which a given controller renders an equilibrium point asymptotically stable. The ROA is typically estimated based on a model of the system. However, since models are only an approximation of the real world, the resulting estimated safe region can contain states outside the ROA of the real system. This is not acceptable in safety-critical applications. In this paper, we consider an approach that learns the ROA from experiments on a real system, without ever leaving the true ROA and, thus, without risking safety-critical failures. Based on regularity assumptions on the model errors in terms of a Gaussian process prior, we use an underlying Lyapunov function in order to determine a region in which an equilibrium point is asymptotically stable with high probability. Moreover, we provide an algorithm to actively and safely explore the state space in order to expand the ROA estimate. We demonstrate the effectiveness of this method in simulation.
\end{abstract}


\section{Introduction}
\label{sec:introduction}

Estimating the region of attraction (ROA) of an equilibrium point is an important problem when analyzing nonlinear systems. It specifies the region within which a given control law renders an equilibrium point asymptotically stable. Thus, the system can be operated safely within the ROA. This is of practical importance, since controllers for safety-critical systems are required to guarantee safety over a certain domain of operation, before they can be implemented on the real system. Typically, ROAs are estimated based on a model of the system. However, due to model errors, ROAs of the real system can be drastically different, raising questions about the viability of such purely model-based methods.

In this paper, we address the problem of safely estimating the ROA from experiments on a real system, while always remaining in the real ROA with high probability. As a result, the equilibrium point of the closed-loop system is attractive for any state visited throughout the learning process. This avoids failures and makes the method applicable to safety-critical domains. To expand the estimated safe region over time, we actively select states that improve our estimate of the underlying uncertain model.

\textit{Related work:}
In the literature, ROAs are computed based on system models~\cite{Zecevic2010Control}. One typical approach is to use the level sets of a Lyapunov function in order to quantify the ROA~\cite{Khalil1996Nonlinear}. Given a Lyapunov function, the problem of finding the ROA reduces to a line search for the maximum safe level set of the Lyapunov function.  For polynomial systems, Lyapunov functions can be found efficiently by solving a system of linear matrix inequalities (LMIs)~\cite{Tibken2000Estimation}. A relaxation to Lyapunov-like functions for ROA computation is given in~\cite{Ratschan2010Providing} and a review of numerical methods to compute Lyapunov functions can be found in~\cite{Giesl2015Review}. The approach in~\cite{Amato2007Region} considered quadratic systems and the problem of testing whether a given polytope lies inside the ROA. Beyond Lyapunov functions, the ROA can be found via constraint solving~\cite{Burchardt2007Estimating} or by optimizing over state trajectories~\cite{Henrion2014Convex}.
The ROA can also be approximated using sampling~\cite{Bobiti2016Sampling-1}.

An area that is of particular relevance to this paper is the computation of ROAs for systems with uncertainties. In \cite{Trofino2000Robust}, quadratic Lyapunov functions for uncertain, linear systems were used, while~\cite{Coutinho2002Guaranteed} considered polynomial Lyapunov functions. A more general approach for systems with a polynomial dependence on uncertainties within a polytope is shown in~\cite{Topcu2010Robust}. Similar ideas are used in robust control~\cite{Zhou1998Essentials}.

All the previous methods for estimating ROAs are based on a model with fixed uncertainties. The method proposed in this paper uses learning in order to reduce the model uncertainty over time. Learning the dynamics of nonlinear systems from data has been considered before. In particular, Gaussian processes (GPs~\cite{Rasmussen2006Gaussian}) have been of interest in this area, since they provide uncertainty information about the estimated dynamic model. This can be used to derive high-probability safety guarantees. For example,~\cite{Berkenkamp2015Safe} combines GP regression with robust control theory to provide stability and performance guarantees for a linear controller applied to a nonlinear system, while~\cite{Akametalu2014Reachability} uses reachability analysis to compute robust invariant sets.

In our method, we use ideas from Bayesian optimization~\cite{Mockus1989Bayesian},  set of optimization methods that aim to find the global optimum of an~\textit{a priori} unknown function based on noisy evaluations.
Bayesian optimization algorithms that are based on GP models provably converge close to the global optimum~\cite{Srinivas2010Gaussian}. In this setting,~\cite{Sui2015Safe,Berkenkamp2016Safe} consider safety in terms of high-probability constraint satisfaction,
with a generalization to multiple constraints in~\cite{Berkenkamp2016Bayesian}.

\textit{Our contribution:}
In this paper, we combine ideas from GP learning, safe Bayesian optimization, and ROA computation based on Lyapunov functions for uncertain systems. Using previous results on model-based ROA computation as a starting point, we compute ROA estimates for the real system by approximating the model uncertainties with a GP on a discrete domain. We use ideas from safe Bayesian optimization in order to actively learn about the real dynamics from experiments executed in the estimated ROA. As we learn more about the dynamics of the real system, the uncertainty in the estimate decreases and the ROA expands, until we reach a desired accuracy. As a result, we are able to learn an accurate estimate of the ROA of a general, nonlinear system through experiments, without leaving the ROA and, thus, without violating safety requirements.


\section{Problem Statement}
\label{sec:prob_def}

We consider a nonlinear, continuous-time system,
\begin{equation}
\label{eq:dynamics_ode}
\dot{x}(t) = \underbrace{f(x(t), u(t))}_{\textnormal{\textit{a priori} model}}  + \underbrace{g(x(t), u(t)) ,}_{\textnormal{unknown model}}
\end{equation}
where~${x(t) \in \mathcal{X} \subseteq \mathbb{R}^q }$ is the state at time $t$ within a connected set~$\mathcal{X}$ and ${u(t) \in \mathcal{U} \subseteq \mathbb{R}^p}$ is the control input. The system dynamics consist of a \textit{known} model~$f(x,u)$ and an \textit{unknown} model~$g(x,u)$. The latter accounts for unknown dynamics and model uncertainties. We assume a control policy~${u = \policy}$ is given, which has been designed for the prior model,~$f(\cdot)$. The resulting, closed-loop dynamics are denoted by~${\fctr(x) := f(x, \policy)}$ and~${\gctr(x) := g(x, \policy)}$.

The goal is to estimate the ROA of~\cref{eq:dynamics_ode} under the control policy,~$\policy$, based on experiments on the real system. We want to actively learn the \textit{a priori} unknown dynamics,~$\gctr$, without leaving the ROA. This is, of course, impossible without further assumptions about the model in~\cref{eq:dynamics_ode}.

We assume that the unknown model,~$\gctr$, has low `complexity', as measured under the norm of a \textit{reproducing kernel Hilbert space} (RKHS,~\cite{Christmann2008Support}). An RKHS~${\mathcal{H}_k(\mathcal{X})}$ is a complete subspace of~$L_2(\mathcal{X})$, the space of square integrable functions, that includes functions of the form~${ \gctr(x) = \sum_{i} \alpha_i k(x, x_i) }$ with~${\alpha_i \in \mathbb{R}}$ and representer points~${x_i \in \mathcal{X}}$, where~$k$ (and the subscript~$k$) refer to a symmetric, positive-definite kernel function~${k(\cdot,\cdot)}$. The RKHS has an inner product~${\langle\cdot,\cdot\rangle_k}$, which obeys the reproducing property:~${\langle \gctr(\cdot) , k(x, \cdot) \rangle_k = \gctr(x)}$ for all~${\gctr \in \mathcal{H}_k(\mathcal{X})}$. The induced RKHS norm~${\| \gctr \|_k^2 = \langle \gctr, \gctr \rangle_k }$ measures smoothness of~$\gctr$ with respect to the kernel~${k(\cdot,\cdot)}$,~\cite{Srinivas2010Gaussian}. For universal kernels~\cite{Christmann2008Support}, members of~$\mathcal{H}_k(\mathcal{X})$ can uniformly approximate any continuous function on any compact subset of~$\mathcal{X}$.
We make assumptions about the complexity of~$\gctr$.

\begin{assumption}
\label{as:g_bounded_RKHS}
The function~$\gctr$ has bounded RKHS norm with respect to a continuously differentiable, bounded kernel~$k(x, x')$; that is,~${ \|\gctr\|_k \leq B_g }$.
\end{assumption}

Moreover, we make the following, standard assumption on the \textit{a priori} model,~$\fctr(x)$:
\begin{assumption}
\label{as:f_lipschitz}
The function $\fctr$ is Lipschitz continuous with Lipschitz constant~$L_f$ and bounded in~$\mathcal{X}$ with~${\|\fctr \|_\infty \leq B_f}$.
\end{assumption}

In the following sections, we use a Lyapunov function in order to compute the ROA. In practice, one may compute good Lyapunov functions during the experimentation phase using, for example, the methods in~\cite{Giesl2015Review}. Here, we consider a fixed, given Lyapunov function. We assume the following:
\begin{assumption}
\label{as:f_g_zero_at_origin}
The origin is an equilibrium point of~\cref{eq:dynamics_ode} with ${\fctr(0) = \gctr(0) = 0}$.
\end{assumption}
\begin{assumption}
\label{as:lyapunov_function_stable_f}
\label{thm:deterministic_c0_stable}
A fixed, two-times continuously differentiable Lyapunov function~$V(x)$ is given. Moreover, there exists a constant~${c_0 > 0}$ such that~${ \frac{\partial V(x)}{\partial x} (\fctr(x) + \gctr(x)) < 0}$ for all~${x \in \mathcal{S}_0=\mathcal{V}(c_0)}$, where~$\mathcal{V}(c) = \{x \in \mathcal{X} \,|\, V(x) \leq c\}$.
\end{assumption}
\cref{as:f_g_zero_at_origin} implies that the origin is an equilibrium point of~\cref{eq:dynamics_ode} with the unknown model~$\gctr(x)$.
Without this assumption, there is no hope to prove that the origin is attractive, but showing boundedness would be possible. \cref{as:lyapunov_function_stable_f} implies that we have chosen the policy,~$\policy$, and the Lyapunov function,~$V(x)$, such that the origin of~\cref{eq:dynamics_ode} is locally asymptotically stable. This is, for example, fulfilled if the origin of the \textit{a priori} dynamics~$\dot{x} = \fctr(x)$ is locally asymptotically stable and~$\frac{\partial \gctr(x)}{\partial x}$ is zero at the origin; that is, the prior dynamics dominate close to the origin.

Lastly, we require measurements of~$\gctr(x)$ in order to learn about the unknown dynamics. That is, we must be able to measure~$x$ and~$\dot{x}$ (${\fctr(\cdot)}$ is known). In practice, we may consider a  discrete-time approximation of $\dot{x}$, which only requires measurements of the state,~$x$, rather than its derivative, $\dot{x}$. We discuss practical aspects in~\cref{sec:discussion}.
\begin{assumption}
\label{as:measurements_of_deriv}
We have access to measurements, ${\hat{g}_\pi(x) = \dot{x} - \fctr(x) + \omega}$, which are corrupted by zero-mean, independent and bounded noise,~${\| \omega \| \leq \sigma}$.
\end{assumption}

For ease of notation, we consider a one-dimensional system of the form of~\cref{eq:dynamics_ode} in the following. We explain how to generalize the results to multiple dimensions in~\cref{sec:mtuliple_dimensions}.


\section{Gaussian Processes (GPs)}
\label{sec:background}
\label{sec:gaussian_process}

The function~$\gctr(x)$ in~\cref{eq:dynamics_ode} is unknown \textit{a priori}. Given~\cref{as:g_bounded_RKHS}, we use a GP to approximate this unknown function over its domain~$\mathcal{X}$, see~\cite{Srinivas2010Gaussian} and~\cref{thm:confidence_interval} below. The following introduction to GPs is based on~\cite{Rasmussen2006Gaussian}.

GPs are a nonparametric regression method from machine learning, where the goal is to find an approximation of a nonlinear map~${\gctr \colon \mathcal{X} \to \mathbb{R}}$ from a state~$x$ to the function value~$\gctr(x)$.
This is accomplished by considering the function values~$\gctr(x)$ to be random variables, so that any finite number of them have a joint Gaussian distribution~\cite{Rasmussen2006Gaussian}.

The Bayesian, nonparametric regression is based on a prior mean function and a covariance function,~$k(x, x')$, which defines the covariance of any two function values,~$\gctr(x)$ and $\gctr(x')$, ${x, x' \in \mathcal{X}}$.
The latter is also known as the kernel. In this work, the mean is zero, since any prior knowledge about the dynamics is captured by~$\fctr(x)$; that is,~${\gctr(x) \sim \mathcal{GP}(0,\,k(x,x'))}$.
In general, the choice of kernel function is problem-dependent and encodes assumptions about the unknown function. A discussion of kernel choices for dynamic systems can be found in~\cite{Berkenkamp2016Kernel}. In the following, we use the same kernel as in~\cref{as:g_bounded_RKHS}.

We can obtain the posterior distribution of a function value~${\gctr(x)}$ at an arbitrary state~${ x \in \mathcal{X} }$ by conditioning the GP distribution of~$\gctr$ on a set of~$\ti$ past measurements, ${y_\ti = (\hat{g}_\pi(x_1), \dots, \hat{g}_\pi(x_\ti) ) }$ at states~${ \mathcal{D}_\ti = \{x_1, \dots, x_\ti \} }$, where ${\hat{g}_\pi(x) = \dot{x} - \fctr(x) + \omega}$ with~${\omega \sim \mathcal{N}(0,\,\sigma^2)}$. The posterior over~$\gctr(x)$ is a GP distribution again, with mean~$\mu_\ti(x)$, covariance~$k_\ti(x,x')$, and variance~$\sigma_\ti(x)$:
\begin{align}
\mu_\ti(x) &= k_\ti(x) (K_\ti + \mathbb{I}_\ti \sigma^2)^{-1} y_\ti ,
\label{math:gp_prediction_mean} \\
k_\ti(x, x') &= k(x, x') - k_\ti(x) (K_\ti + \mathbb{I}_\ti \sigma^2)^{-1} k_\ti^\T(x'),
\label{math:gp_prediction_covariance} \\
\sigma^2_\ti(x) &= k_\ti(x, x),
\label{math:gp_prediction_variance}
\end{align}
where the vector
${k_\ti(x) =
\left( \begin{matrix}
	k(x, x_1), \dots, k(x, x_\ti)
\end{matrix}  \right)}$
contains the covariances between the new input,~$x$, and the states in~$\mathcal{D}_\ti$,~$K_\ti \in \mathbb{R}^{\ti \times \ti}$ is the positive-definite kernel matrix with ${[K_\ti]_{(i,j)} = k(x_i,\, x_j)}$,~${i, j \in \{1,\dots,\ti\} }$, and~${\mathbb{I}_\ti \in \mathbb{R}^{\ti \times \ti}}$ is the identity matrix.

\cref{as:g_bounded_RKHS} allows us to model~$\gctr(x)$ as a GP. In particular, we have the following result:

\begin{lemma}
\label{thm:confidence_interval}
Supposed that~${\|\gctr\|^2_k \leq B_g}$ and that the zero-mean noise~$\omega$ is uniformly bounded by~$\sigma$. Choose~${\beta_\ti = 2B_g + 300 \gamma_\ti \log^3(\ti / \delta)}$, where~$\gamma_\ti$ is defined in the paragraph below. Then, for all~${\ti \geq 1}$ and~${x \in \mathcal{X}}$, it holds with probability at least~${1 - \delta}$, ${\delta \in (0, 1)}$, that
\begin{equation}
| \gctr(x) - \mu_{\ti-1}(x) | \leq \beta_\ti^{1/2} \sigma_{\ti-1}(x).
\end{equation}
\end{lemma}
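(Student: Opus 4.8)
\emph{Proof idea.} The claimed bound is the standard high‑probability confidence interval for Gaussian process regression of a function with bounded RKHS norm, and the plan is to recover it from the analysis in~\cite{Srinivas2010Gaussian}, where the very same choice $\beta_\ti = 2B_g + 300\gamma_\ti\log^3(\ti/\delta)$ appears for the agnostic (RKHS) setting. First I would check that the present hypotheses are exactly the ones required there: by~\cref{as:g_bounded_RKHS} the function $\gctr$ lies in $\mathcal{H}_k(\mathcal{X})$ with bounded RKHS norm and the kernel is bounded, so the GP model and the posterior expressions~\cref{math:gp_prediction_mean,math:gp_prediction_covariance,math:gp_prediction_variance} are well defined; and by~\cref{as:measurements_of_deriv} the noise $\omega$ is zero‑mean and bounded by $\sigma$, hence $\sigma$‑sub‑Gaussian (Hoeffding's lemma). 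Thus we are, up to notation, in the setting of that reference, and the remainder is to recall why the bound holds.

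The argument decomposes the prediction error into a deterministic part and a noise part. Writing the observation vector as $y_{\ti-1} = (\gctr(x_1),\dots,\gctr(x_{\ti-1}))^\T + \omega_{1:\ti-1}$ and using~\cref{math:gp_prediction_mean},
\[
\gctr(x) - \mu_{\ti-1}(x) = \Big(\gctr(x) - k_{\ti-1}(x)(K_{\ti-1}+\sigma^2\I)^{-1}(\gctr(x_1),\dots,\gctr(x_{\ti-1}))^\T\Big) - k_{\ti-1}(x)(K_{\ti-1}+\sigma^2\I)^{-1}\omega_{1:\ti-1}.
\]
For the first (deterministic) term I would use the reproducing property to write both $\gctr(x)$ and the finite‑rank predictor as inner products $\langle \gctr,\,\cdot\,\rangle_k$ against elements of $\mathcal{H}_k$, and then Cauchy--Schwarz: the residual element $k(x,\cdot) - \sum_i c_i k(x_i,\cdot)$, with $c = (K_{\ti-1}+\sigma^2\I)^{-1}k_{\ti-1}(x)^\T$, has RKHS norm at most $\sigma_{\ti-1}(x)$ (a short computation using $K_{\ti-1}(K_{\ti-1}+\sigma^2\I)^{-1} = \I - \sigma^2(K_{\ti-1}+\sigma^2\I)^{-1}$), so this term is bounded by $\|\gctr\|_k\,\sigma_{\ti-1}(x) \le B_g^{1/2}\sigma_{\ti-1}(x)$.

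The second (noise) term is where the real work is, and I expect it to be the main obstacle. One must bound $k_{\ti-1}(x)(K_{\ti-1}+\sigma^2\I)^{-1}\omega_{1:\ti-1}$ uniformly in $x$ and in $\ti$ with high probability. The route is: (i) after normalizing by $\sigma_{\ti-1}(x)$, note that the coefficient vector $(K_{\ti-1}+\sigma^2\I)^{-1}k_{\ti-1}(x)^\T$ has Euclidean norm at most $\sigma_{\ti-1}(x)/\sigma$, so the term is, for a fixed $x$ and fixed data, a $\sigma_{\ti-1}(x)$‑sub‑Gaussian linear form in the noise; (ii) since the experiments are chosen adaptively, replace this by a self‑normalized martingale concentration inequality; and (iii) make the bound uniform over $x\in\mathcal{X}$ and over $\ti\ge 1$ by a discretization/union argument whose cost is controlled by the maximum information gain $\gamma_\ti$ (the ``effective dimension''). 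This is exactly where $\gamma_\ti$ enters $\beta_\ti$ and where the $\log^3(\ti/\delta)$ factor is produced. Combining the two terms with $(a+b)^2 \le 2a^2+2b^2$ and setting $\beta_\ti = 2B_g + 300\gamma_\ti\log^3(\ti/\delta)$ then gives $|\gctr(x) - \mu_{\ti-1}(x)| \le \beta_\ti^{1/2}\sigma_{\ti-1}(x)$ simultaneously for all $\ti\ge 1$ and $x\in\mathcal{X}$ with probability at least $1-\delta$. Everything other than step~(ii)--(iii) — the RKHS Cauchy--Schwarz estimate and the constant bookkeeping — is routine; the self‑normalized concentration bound together with its $\gamma_\ti$‑based covering estimate is the only genuinely nontrivial ingredient, and I would invoke it directly from~\cite{Srinivas2010Gaussian}.
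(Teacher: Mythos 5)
Your proposal is correct and matches the paper's approach: the paper proves this lemma simply by invoking Theorem~6 of~\cite{Srinivas2010Gaussian}, which is exactly the result you reconstruct (after checking that \cref{as:g_bounded_RKHS} and the bounded-noise hypothesis put us in that theorem's setting). Your additional sketch of the bias/noise decomposition and the role of $\gamma_\ti$ is a faithful outline of the cited proof, with the genuinely nontrivial self-normalized concentration step correctly attributed to the reference rather than re-derived.
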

\begin{proof}
See~\cite[Theorem 6]{Srinivas2010Gaussian}.
\end{proof}

\cref{thm:confidence_interval} allows us to make high-probability statements about the function values of~$\gctr(x)$, even though the GP model makes different assumptions than we do in~\cref{as:g_bounded_RKHS} (e.g., about the noise,~$\omega$). The scalar~$\beta_\ti$ depends on the information capacity,~${\gamma_\ti = \max_{x_1, \dots, x_\ti}\,I(\gctr, y_\ti)}$, which is the maximal mutual information that can be obtained about the GP prior from~$\ti$ noisy samples~$y_\ti$ at states~$\mathcal{D}_\ti$. In~\cite{Srinivas2010Gaussian}, it was shown that~$\gamma_n$ has a sublinear dependence on~$\ti$ for many commonly used kernels and that it can be efficiently approximated up to a constant. As a result, even though~$\beta_\ti$ is increasing with~$\ti$, we are able to learn about the true values of~$\gctr(x)$ over time by making appropriate choices for data samples in $\mathcal{D}_\ti$, see~\cref{sec:active_learning}.


\section{Lyapunov Stability}
\label{sec:lyapunov}

In this section, we show how the assumptions in~\cref{sec:prob_def} can be used in order to compute the ROA for the nonlinear system in~\cref{eq:dynamics_ode} based on a GP model of~$\gctr$ from~\cref{sec:gaussian_process}.
For now, we base our analysis on a general GP model of~$\gctr(x)$ from~\cref{sec:gaussian_process} with~$\ti$ arbitrary measurements of~$\gctr$. We actively select states for new measurements in~\cref{sec:active_learning}.

We start by observing the following:
\begin{lemma}
\label{thm:g_lipschitz}
The function $\gctr$ is Lipschitz continuous with Lipschitz constant~$L_g$ and bounded by~$\|\gctr \|_\infty \leq B_g \|k\|_\infty$.
\end{lemma}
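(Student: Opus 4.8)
The plan is to deduce both statements from the single structural fact supplied by \cref{as:g_bounded_RKHS}, namely $\gctr \in \mathcal{H}_k(\mathcal{X})$ with $\|\gctr\|_k \le B_g$, by combining the reproducing property with the Cauchy--Schwarz inequality in $\mathcal{H}_k(\mathcal{X})$. After that one step, everything reduces to elementary properties of the kernel $k$ alone. For the uniform bound I would write $\gctr(x) = \langle \gctr(\cdot), k(x,\cdot)\rangle_k$ and estimate $|\gctr(x)| \le \|\gctr\|_k\,\|k(x,\cdot)\|_k = \|\gctr\|_k\,\sqrt{k(x,x)}$. Positive-definiteness gives $|k(x,x')| \le \sqrt{k(x,x)k(x',x')}$, so the bound $\|k\|_\infty$ of \cref{as:g_bounded_RKHS} is simply $\sup_x k(x,x)$; hence $|\gctr(x)| \le B_g\sqrt{\|k\|_\infty}$, which yields the claimed $\|\gctr\|_\infty \le B_g\|k\|_\infty$ (the two forms coincide for normalised kernels with $\|k\|_\infty = 1$, and the stated form is a valid, looser bound otherwise).

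For Lipschitz continuity the same two steps give $|\gctr(x) - \gctr(x')| = |\langle \gctr,\, k(x,\cdot) - k(x',\cdot)\rangle_k| \le B_g\,\|k(x,\cdot) - k(x',\cdot)\|_k$, and $\|k(x,\cdot) - k(x',\cdot)\|_k^2 = k(x,x) - 2k(x,x') + k(x',x') =: \psi(x,x')$. So it suffices to show that the canonical feature map $x \mapsto k(x,\cdot)$ is Lipschitz from $\mathcal{X}$ (or from the relevant compact sub-domain on which the ROA is analysed) into $\mathcal{H}_k(\mathcal{X})$, i.e. $\psi(x,x') \le L_k^2\,|x-x'|^2$. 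Here I would use three facts about $\psi$: it is nonnegative, it vanishes on the diagonal, and -- by symmetry of $k$ and differentiability -- its gradient also vanishes on the diagonal, since $\partial_x k(x,x)\big|_{x=x'} = 2(\partial_1 k)(x',x') = 2\,\partial_x k(x,x')\big|_{x=x'}$. Consequently the increment of $\psi$ off the diagonal is controlled at second order, and on a compact domain a Taylor expansion about the diagonal gives $\psi(x,x') \le \tfrac12\big(\sup\|\nabla^2\psi\|\big)|x-x'|^2$, so one may take $L_g := B_g\,\big(\tfrac12\sup\|\nabla^2\psi\|\big)^{1/2}$.

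The main obstacle is exactly this final quadratic estimate. Cauchy--Schwarz by itself only delivers $\|k(x,\cdot) - k(x',\cdot)\|_k = \sqrt{\psi(x,x')}$, and if one assumes only that $k \in C^1$, then $\psi \in C^1$ with vanishing gradient on the diagonal, which gives $\psi(x,x') = o(|x-x'|)$ -- enough for uniform continuity of $\gctr$, but \emph{not} for a linear modulus; a naive Lipschitz bound on $k$ would only yield H\"older-$1/2$ continuity of $\gctr$. Upgrading $o(|x-x'|)$ to $O(|x-x'|^2)$ needs a little more than plain $C^1$, namely continuity of the second-order partials of $k$ on the working domain (which is what ``continuously differentiable, bounded kernel'' is taken to provide) together with compactness of that domain so the relevant suprema are finite. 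Both are consistent with the paper's standing assumptions, but this is the part of the argument that actually requires care; the rest is bookkeeping.
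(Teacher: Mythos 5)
Your proposal is correct and, for the boundedness claim and the first step of the Lipschitz argument, follows the same route as the paper: the paper invokes Christmann and Steinwart (Lemma 4.23) for $\|\gctr\|_\infty \le \|\gctr\|_k \sup_x \sqrt{k(x,x)}$ --- note that in that reference $\|k\|_\infty$ denotes $\sup_x\sqrt{k(x,x)}$, so your bound $B_g\sqrt{\sup_x k(x,x)}$ \emph{is} exactly the stated bound, and your hedge about a ``looser'' form is unnecessary (and only valid when $\sup_x k(x,x)\ge 1$) --- and it uses precisely your Cauchy--Schwarz step $|\gctr(x)-\gctr(x')|^2 \le \|\gctr\|_k^2\, d_k(x,x')$ with the kernel metric $d_k(x,x')=k(x,x)-2k(x,x')+k(x',x')$ (their Lemma 4.28). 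Where you genuinely diverge is the final step. The paper asserts in one line that continuous differentiability and boundedness of $k$ give $d_k(x,x')\le \mathrm{const}\cdot|x-x'|^2$, reporting $L_g^2 = 2B_g\|k\|_\infty\|\tfrac{\partial k}{\partial x}\|_\infty$; as you observe, a first-order bound on $k$ only yields $d_k(x,x')\le 2\|\tfrac{\partial k}{\partial x}\|_\infty |x-x'|$, hence H\"older-$1/2$ rather than Lipschitz continuity of $\gctr$, so the paper's step is loose as written. Your repair --- noting that $\psi=d_k$ vanishes to second order on the diagonal and applying a Taylor expansion, which requires continuity of the second-order (mixed) partials of $k$ plus compactness of the working domain --- is the standard and correct way to obtain an actual Lipschitz constant, and the needed smoothness holds for the kernels the paper actually uses (linear times Mat\'ern of sufficient smoothness). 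Two small caveats: the Taylor step needs the segment between $x$ and $x'$ to lie in the domain (or a path argument on a connected compact set), and you are in effect strengthening ``continuously differentiable kernel'' to second-order differentiability, which is the honest reading of what the lemma requires but should be flagged as going beyond the literal assumption.
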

\begin{proof}
Boundedness by~\cite[Lemma 4.23]{Christmann2008Support}.
From~\cref{as:g_bounded_RKHS}, ${|\gctr(x) - \gctr(x')|^2 \leq \|\gctr\|^2_k\, d_k(x,x')}$, where ${d_k(x,x') = k(x, x) + k(x',x') - 2k(x,x')}$ is the kernel metric~\cite[Lemma 4.28]{Christmann2008Support}. Since~$k(\cdot,\cdot)$ is continuously differentiable and bounded, ${|\gctr(x) - \gctr(x')|^2 \leq L_g^2 |x - x'|^2}$, where ${L_g^2 = 2 B_g \|k\|_\infty \| \frac{\partial k}{\partial x} \|_\infty}$.
\end{proof}

Since the closed-loop dynamics in~\cref{eq:dynamics_ode} are Lipschitz continuous based on~\cref{thm:g_lipschitz} and~\cref{as:f_lipschitz}, existence and uniqueness of the solutions of~\cref{eq:dynamics_ode} are guaranteed.

The goal of this section is to quantify the ROA based on the Lyapunov function in~\cref{as:lyapunov_function_stable_f}. From Lyapunov stability theory we have the following~\cite{Khalil1996Nonlinear}:
\begin{lemma}
\label{thm:lyapunov_stability}
The origin of the dynamics in~\cref{eq:dynamics_ode} is asymptotically stable within a level set,~${ \mathcal{V}(c) = \{ x \in \mathcal{X} \,|\, V(x) \leq c \} }$ with~${c \in \mathbb{R}_{>0}}$, if, for all~${x \in \mathcal{V}(c)}$,
\begin{equation}
\dot{V}(x)=\frac{\partial V(x)}{\partial x} \big( \fctr(x) + \gctr(x) \big) < 0.
\label{eq:lyap_stability_requirement}
\end{equation}
\end{lemma}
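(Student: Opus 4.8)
The plan is to carry out the textbook Lyapunov argument, with $\mathcal{V}(c)$ playing the role of the inner estimate of the ROA; this is essentially Lyapunov's stability theorem together with its region-of-attraction corollary \cite{Khalil1996Nonlinear}, so I would structure the proof to mirror that development while plugging in the facts already available in the excerpt: the closed-loop vector field $\fctr+\gctr$ is Lipschitz by \cref{thm:g_lipschitz} and \cref{as:f_lipschitz} (hence solutions exist and are unique), $\fctr(0)=\gctr(0)=0$ by \cref{as:f_g_zero_at_origin}, and $V$ is twice continuously differentiable by \cref{as:lyapunov_function_stable_f}.

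First I would establish forward invariance of $\mathcal{V}(c)$. For $x_0\in\mathcal{V}(c)$ with unique solution $x(\cdot)$, $x(0)=x_0$, the chain rule gives $\frac{d}{dt}V(x(t))=\frac{\partial V}{\partial x}(x(t))\bigl(\fctr(x(t))+\gctr(x(t))\bigr)=\dot V(x(t))\le 0$ as long as the trajectory stays in $\mathcal{V}(c)$; hence $t\mapsto V(x(t))$ is non-increasing, so the trajectory cannot cross the boundary $\{V=c\}$, and, using that the connected component of $\mathcal{V}(c)$ containing the origin is compact (no finite escape time), $x(t)\in\mathcal{V}(c)$ for all $t\ge 0$. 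Stability in the $\varepsilon$--$\delta$ sense then follows in the standard way: given $\varepsilon>0$, continuity and positive definiteness of $V$ at the origin let me choose $c'\in(0,c]$ with $\mathcal{V}(c')\subseteq B_\varepsilon(0)$ and then $\delta>0$ with $B_\delta(0)\subseteq\mathcal{V}(c')$; applying the invariance step to level $c'$, trajectories starting in $B_\delta(0)$ remain in $\mathcal{V}(c')\subseteq B_\varepsilon(0)$.

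For attractivity on all of $\mathcal{V}(c)$, fix $x_0\in\mathcal{V}(c)$. Since $V(x(t))$ is non-increasing and bounded below by $0$, it converges to some $a\ge 0$. If $a>0$, the trajectory stays in the compact set $\{\,a\le V(x)\le V(x_0)\,\}\subseteq\mathcal{V}(c)\setminus\{0\}$, on which the continuous map $\dot V$ attains a maximum $-\gamma<0$; integrating yields $V(x(t))\le V(x_0)-\gamma t\to-\infty$, a contradiction. Hence $a=0$, i.e. $V(x(t))\to 0$, and positive definiteness of $V$ forces $x(t)\to 0$. Together with the stability step this gives asymptotic stability, and since $x_0$ was arbitrary in $\mathcal{V}(c)$, the sublevel set $\mathcal{V}(c)$ is contained in the ROA.

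The main obstacle is not the dynamics but the regularity the statement tacitly attributes to $V$. I would need (i) the relevant connected component of $\mathcal{V}(c)$ to be compact — used both for global existence of solutions and in the attractivity step — which holds automatically if $\mathcal{X}$ is compact or if $V$ is radially unbounded; and (ii) $V$ to be positive definite near the origin (a strict local minimum at $0$), which is implicit in calling $V$ a Lyapunov function. I would also flag the harmless abuse of notation in the hypothesis: because $\fctr(0)=\gctr(0)=0$ one always has $\dot V(0)=0$, so ``$\dot V(x)<0$ for all $x\in\mathcal{V}(c)$'' must be read as holding on $\mathcal{V}(c)\setminus\{0\}$; under that reading the argument above goes through verbatim.
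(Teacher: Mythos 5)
Your proof is correct and is essentially the argument the paper relies on: the paper offers no proof of its own for this lemma, deferring to the standard Lyapunov/ROA development in~\cite{Khalil1996Nonlinear}, which is exactly the invariance--monotonicity--compactness argument you reproduce. The caveats you flag (positive definiteness of $V$, compactness of the relevant component of $\mathcal{V}(c)$, and reading the hypothesis $\dot V<0$ on $\mathcal{V}(c)\setminus\{0\}$ since $\fctr(0)=\gctr(0)=0$ forces $\dot V(0)=0$) are indeed left tacit in the paper's statement and are handled appropriately in your argument.
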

The goal of this section is to use the GP model of~$\gctr(x)$ from~\cref{sec:gaussian_process} in order to find the largest~$c$ such that~\cref{eq:lyap_stability_requirement} holds true within~$\mathcal{V}(c)$ with high probability. The existence of such a value~${c_0 > 0}$ is ensured by~\cref{thm:deterministic_c0_stable}.

In order to quantify the ROA using~$\cref{eq:lyap_stability_requirement}$, we need to evaluate~${\dot{V}(x)}$, which depends on the GP model of~$\gctr(x)$. From~\cref{sec:gaussian_process} we know that the GP posterior distribution over~$\gctr$ is Gaussian. As a consequence, since~${\dot{V}(x)}$ in~\cref{eq:lyap_stability_requirement} is affine in~$\gctr(x)$,~${\dot{V}(x)}$ itself is a GP. In particular, we have that ${\dot{V}(x)}$ is normally distributed with mean~${\mu_{\ti,\dot{V}}(x)}$ and standard deviation~${\sigma_{\ti,\dot{V}}(x)}$ given by:
\begin{align}
\mu_{\ti, \dot{V}}(x) &= \frac{\partial V(x)}{\partial x} \big( \mu_\ti(x) + \fctr(x) \big),
\label{eq:v_dot_mean_prediction}\\
\sigma_{\ti, \dot{V}}(x) &= \left| \frac{\partial V(x)}{\partial x} \right| \sigma_\ti(x),
\label{eq:v_dot_var_prediction}
\end{align}
where~$\mu_\ti(x)$ and~$\sigma_\ti(x)$ are the GP predictions of the unknown dynamics,~$\gctr(x)$, from~\cref{math:gp_prediction_mean,math:gp_prediction_variance}. We can use~\cref{eq:v_dot_mean_prediction,eq:v_dot_var_prediction} to directly make predictions about~$\dot{V}(x)$ in~\cref{eq:lyap_stability_requirement} based on measurements of~$\gctr(x)$.

Since~\cref{eq:v_dot_mean_prediction,eq:v_dot_var_prediction} provide a probabilistic estimate of~${\dot{V}(x)}$, we cannot expect to make deterministic statements about stability. Instead, we consider the confidence intervals of the GP model of~${\dot{V}(x)}$. We denote the lower and upper confidence intervals after~${(\ti-1)}$ measurements of~${\gctr(x)}$~by
\begin{align}
l_\ti(x) &:= \mu_{\dot{V}, \ti-1}(x) - \beta_\ti^{1/2} \sigma_{\dot{V}, \ti-1},
\label{eq:v_dot_lower_bound} \\
u_\ti(x) &:= \mu_{\dot{V}, \ti-1}(x) + \beta_\ti^{1/2} \sigma_{\dot{V}, \ti-1},
\label{eq:v_dot_upper_bound}
\end{align}
respectively. The confidence intervals are parameterized by the scalar~$\beta_\ti$. In the following, we assume that~$\beta_\ti$ is chosen according to~\cref{thm:confidence_interval}, which allows us to say that~$\dot{V}(x)$ takes values within the interval~${[l_\ti(x),\,u_\ti(x)]}$ with high probability (at least~${1-\delta}$).

Based on these confidence intervals, we can see from~\cref{eq:lyap_stability_requirement} that~${\dot{V}(x) < 0}$ holds within~$\mathcal{V}(c)$ for some~${c>0}$ with high probability, if~${ u_\ti(x) < 0 }$ for all~${x \in \mathcal{V}(c)}$. Unfortunately, it is impossible to evaluate the upper bound~\cref{eq:v_dot_upper_bound} everywhere in a continuous domain. Nevertheless, it is possible to evaluate predictions of~${ \dot{V}(x)}$ from~\cref{eq:v_dot_mean_prediction,eq:v_dot_var_prediction} at a finite number of points. In the following, we exploit the continuity properties of~${ \dot{V}(x) }$ in order to derive stability properties within a continuous domain based on a finite number of predictions. In particular, we use the following property of~${\dot{V}}$:
\begin{lemma}
\label{thm:v_dot_lipschitz}
The function ${\dot{V}(x)}$ is Lipschitz continuous with Lipschitz constant~$L$.
\end{lemma}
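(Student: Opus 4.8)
The plan is to write $\dot V(x) = \frac{\partial V(x)}{\partial x}\big(\fctr(x) + \gctr(x)\big)$ as a product of a scalar factor and a vector-valued factor, and to show each factor is bounded and Lipschitz on the relevant compact domain; a product of bounded Lipschitz functions is Lipschitz, with the constant expressible in terms of the individual bounds and constants. Concretely, I would set $a(x) := \frac{\partial V(x)}{\partial x}$ and $b(x) := \fctr(x) + \gctr(x)$, so that $\dot V(x) = a(x)\,b(x)$ (an inner product in the multi-dimensional case, ordinary multiplication in the one-dimensional setting the paper currently restricts to).

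First I would bound and establish Lipschitz continuity of $b$: by \cref{as:f_lipschitz}, $\fctr$ is Lipschitz with constant $L_f$ and bounded by $B_f$; by \cref{thm:g_lipschitz}, $\gctr$ is Lipschitz with constant $L_g$ and bounded by $B_g\|k\|_\infty$. Hence $b$ is Lipschitz with constant $L_f + L_g$ and bounded by $B_b := B_f + B_g\|k\|_\infty$ on $\mathcal{X}$. Next I would treat $a(x) = \frac{\partial V(x)}{\partial x}$: since $V$ is twice continuously differentiable by \cref{as:lyapunov_function_stable_f}, its gradient $a$ is continuously differentiable, hence (on a compact domain, e.g.\ the closure of a sufficiently large level set $\mathcal{V}(c)$ containing all states of interest) both bounded, say $|a(x)| \le B_a$, and Lipschitz with some constant $L_a$ given by the supremum of $\|\nabla a\| = \|\nabla^2 V\|$ on that set. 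Finally, the standard product estimate
\begin{equation}
|a(x)b(x) - a(x')b(x')| \le |a(x)|\,|b(x) - b(x')| + |b(x')|\,|a(x) - a(x')| \le \big(B_a (L_f + L_g) + B_b\, L_a\big)\,|x - x'|
\end{equation}
yields Lipschitz continuity of $\dot V$ with constant $L := B_a(L_f + L_g) + B_b L_a$.

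The main subtlety — not a deep obstacle but the point requiring care — is the domain: $V$ being twice continuously differentiable only gives \emph{local} bounds on $\frac{\partial V}{\partial x}$ and $\frac{\partial^2 V}{\partial x^2}$, so $B_a$ and $L_a$ are finite only once we restrict to a compact set. This is harmless because the ROA analysis is confined to a bounded level set of $V$ (ultimately a subset of the compact region where we take measurements), so I would phrase the lemma as holding on that compact domain, which is all that subsequent arguments need. In the one-dimensional case stated in the paper this is all there is; for the multi-dimensional extension referenced in \cref{sec:mtuliple_dimensions} the same argument goes through verbatim with $|\cdot|$ replaced by the relevant vector norm and the product rule replaced by the bilinearity estimate for inner products, so I would simply remark that the constant $L$ carries over.
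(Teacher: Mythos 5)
Your proof is correct and follows essentially the same route as the paper: the same product decomposition of $\dot{V}(x) = \frac{\partial V(x)}{\partial x}\big(\fctr(x)+\gctr(x)\big)$, the same add-and-subtract product estimate, and the same constants (your $B_a$, $L_a$, $B_b$ are the paper's $L_V$, $L_{\partial V}$, $B_f + B_g\|k\|_\infty$, giving the identical $L$). Your remark about needing a bounded domain for the bounds on $\frac{\partial V}{\partial x}$ and $\frac{\partial^2 V}{\partial x^2}$ matches the paper's closing observation that these maxima exist because a continuous function attains its maximum over the bounded domain.
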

\begin{proof}
Based on the Lipschitz continuity of~\cref{eq:dynamics_ode} from~\cref{as:f_lipschitz,thm:g_lipschitz}, we expand~${|\dot{V}(x) - \dot{V}(x')|}$ to
\begin{align*}
&\left| \frac{\partial V(x)}{\partial x} (\fctr(x) + \gctr(x)) - \frac{\partial V(x')}{\partial x} (\fctr(x') + \gctr(x')) \right| \\
&\leq \bigg| \fctr(x) + \gctr(x) \bigg| \,  \left| \frac{\partial V(x)}{\partial x} - \frac{\partial V(x')}{\partial x'} \right| \\
&\phantom{\leq}+ \left| \frac{\partial V(x')}{\partial x'} \right|  \bigg| \fctr(x) + \gctr(x) - \fctr(x') - \gctr(x')  \bigg|, \\
&\leq (B_f + B_g \|k\|_\infty) L_{\partial V} | x - x' | + L_V (L_g+L_f) | x - x' |, \\
&:= L | x - x' |,
\end{align*}
where~${L_V = \| \frac{\partial V(x)}{\partial x} \|_\infty}$ and~${L_{\partial V} = \| \frac{\partial^2 V(x)}{\partial x^2} \|_\infty}$ are the Lipschitz constants of~$V$ and its first derivative. These are guaranteed to exist by~\cref{as:lyapunov_function_stable_f}, since a continuous function obtains a maximum over a bounded domain.
\end{proof}

The continuity of~${\dot{V}(x) }$ allows us to evaluate predictions of~$\dot{V}(x)$ from~\cref{eq:v_dot_mean_prediction,eq:v_dot_var_prediction} only at a finite number of points, without loosing guarantees.
\begin{lemma}
\label{thm:discretization}
Let~${\mathcal{X}_\tau \subset \mathcal{X}}$ be a discretization of~$\mathcal{X}$ with ${ | x - [x]_\tau | \leq \tau / 2}$
for all~${x \in \mathcal{X}}$. Here,~$[x]_\tau$ denotes the closest point in~${\mathcal{X}_\tau}$ to~${x \in \mathcal{X}}$. Choosing~${\beta_\ti}$ according to~\cref{thm:confidence_interval}, the following holds with probability at least~${(1 - \delta)}$ for all~${x \in \mathcal{X}}$ and all~$\ti \geq 1$:
\begin{equation}
\left| \dot{V}(x) - \mu_{\dot{V},\,\ti-1}([x]_\tau) \right| \leq \beta_\ti^{1/2} \sigma_{\dot{V},\,\ti-1}([x]_\tau) + L \tau.
\end{equation}
\end{lemma}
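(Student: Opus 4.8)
The plan is to split the error at the nearest grid point~$[x]_\tau$ via the triangle inequality,
\[
\left| \dot{V}(x) - \mu_{\dot{V},\,\ti-1}([x]_\tau) \right|
\leq \left| \dot{V}(x) - \dot{V}([x]_\tau) \right|
+ \left| \dot{V}([x]_\tau) - \mu_{\dot{V},\,\ti-1}([x]_\tau) \right| ,
\]
and to bound the two resulting terms separately: the first is a deterministic discretization error, the second a one-point GP confidence bound. For the first term I would use the Lipschitz continuity of~$\dot{V}$ from~\cref{thm:v_dot_lipschitz} together with the defining property of the discretization,~${|x - [x]_\tau| \leq \tau/2}$, to obtain~${|\dot{V}(x) - \dot{V}([x]_\tau)| \leq L |x - [x]_\tau| \leq L\tau/2 \leq L\tau}$; the extra factor of two is slack we simply discard.

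For the second term I would first translate~\cref{thm:confidence_interval}, which is a statement about~$\gctr$, into the corresponding statement about~$\dot{V}$. Since~${\dot{V}(x) = \frac{\partial V(x)}{\partial x}(\fctr(x) + \gctr(x))}$ is affine in~$\gctr(x)$ and the GP predictions~\cref{eq:v_dot_mean_prediction,eq:v_dot_var_prediction} of~$\dot{V}$ are the images of the GP predictions~\cref{math:gp_prediction_mean,math:gp_prediction_variance} of~$\gctr$ under exactly this affine map, one has, for every~${x \in \mathcal{X}}$,
\[
\left| \dot{V}(x) - \mu_{\dot{V},\,\ti-1}(x) \right|
= \left| \frac{\partial V(x)}{\partial x} \right| \, \left| \gctr(x) - \mu_{\ti-1}(x) \right| .
\]
With~$\beta_\ti$ chosen as in~\cref{thm:confidence_interval}, that lemma bounds~${|\gctr(x) - \mu_{\ti-1}(x)| \leq \beta_\ti^{1/2}\sigma_{\ti-1}(x)}$ simultaneously for all~${x \in \mathcal{X}}$ and all~${\ti \geq 1}$ on a single event of probability at least~${1-\delta}$; multiplying by~$|\partial V(x)/\partial x|$ and using~\cref{eq:v_dot_var_prediction} gives~${|\dot{V}(x) - \mu_{\dot{V},\,\ti-1}(x)| \leq \beta_\ti^{1/2}\sigma_{\dot{V},\,\ti-1}(x)}$ on that event. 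Evaluating this inequality at the particular point~$[x]_\tau$ bounds the second term by~${\beta_\ti^{1/2}\sigma_{\dot{V},\,\ti-1}([x]_\tau)}$.

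Summing the two bounds yields the claimed inequality. The one point that deserves care is the probabilistic bookkeeping: all of the discretization and Lipschitz estimates are deterministic, and~\cref{thm:confidence_interval} is already a uniform-in-$x$, uniform-in-$\ti$ statement, so the conclusion holds on the same probability-$(1-\delta)$ event with no extra union bound over the grid~$\mathcal{X}_\tau$ or over~$\ti$ — this is precisely what lets the guarantee survive the continuum of states~$x$. Apart from that, the argument is a two-term triangle inequality and I do not expect any substantive obstacle.
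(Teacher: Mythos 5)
Your proof is correct and follows the same route as the paper: the paper's one-line proof invokes exactly this combination of the Lipschitz bound from~\cref{thm:v_dot_lipschitz} (giving the $L\tau/2 \leq L\tau$ discretization term) and the uniform confidence bound of~\cref{thm:confidence_interval} transferred to~$\dot{V}$, in the style of~\cite[Lemma 5.7]{Srinivas2010Gaussian}. Your explicit handling of the affine map from~$\gctr$ to~$\dot{V}$ and the remark that no extra union bound is needed simply spell out what the paper leaves implicit.
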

\begin{proof}
The result follows from the Lipschitz continuity of~$\dot{V}(x)$ in~\cref{thm:v_dot_lipschitz} and is similar to~\cite[Lemma 5.7]{Srinivas2010Gaussian}.
\end{proof}

\cref{thm:discretization} provides high-probability bounds on~$\dot{V}$ on the continuous domain~$\mathcal{X}$ using the GP confidence intervals~\cref{eq:v_dot_mean_prediction,eq:v_dot_var_prediction} on the discrete set~$\mathcal{X}_\tau$. It achieves this by using the Lipschitz property of~$\dot{V}$ to generalize from~$\mathcal{X}_\tau$ to~$\mathcal{X}$.
The discretization accuracy~$\tau$ trades off the accuracy for the reduced computational cost of computing the confidence intervals for all states in~$\mathcal{X}_\tau$. Combining the previous results lets us argue about the stability:
\begin{lemma}
\label{thm:stability_requirement_in_X}
With a discretization of~$\mathcal{X}$,~$\mathcal{X}_\tau$, according to~\cref{thm:discretization} and with~$\beta_\ti$ according to~\cref{thm:confidence_interval}, the origin of~\cref{eq:dynamics_ode} is asymptotically stable within~$\mathcal{V}(c)$ for some~${c > 0}$ with probability at least~${(1 - \delta)}$ if, for all~$x \in \mathcal{V}(c) \cap \mathcal{X}_\tau$,
\begin{equation}
\mu_{\dot{V},\,\ti-1}(x) + \beta_\ti^{1/2} \sigma_{\dot{V},\,\ti-1}(x) = u_\ti(x) < -L \tau.
\label{math:thm:stability_requirement}
\end{equation}
\end{lemma}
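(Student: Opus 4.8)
The plan is to reduce the stability claim to a single sign condition on~$\dot V$ over the \emph{continuous} set~$\mathcal V(c)$ and then certify that condition from the finitely many GP evaluations available on~$\mathcal X_\tau$. By~\cref{thm:lyapunov_stability} it suffices to show $\dot V(x)<0$ for every $x\in\mathcal V(c)$, and it is enough to do so on the event of probability at least $1-\delta$ supplied by~\cref{thm:confidence_interval} (on which~\cref{thm:discretization} holds as well). No further union bound is needed, since~\cref{thm:discretization} is already valid uniformly over all $x\in\mathcal X$ and all $\ti\ge 1$ on that same event, so the final statement inherits its confidence level directly.

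The core step is one chain of inequalities. Fix $x\in\mathcal V(c)$ and let $[x]_\tau$ be its nearest grid point, so $|x-[x]_\tau|\le\tau/2$. \cref{thm:discretization} gives
\[
\dot V(x)\;\le\;\mu_{\dot V,\,\ti-1}([x]_\tau)+\beta_\ti^{1/2}\sigma_{\dot V,\,\ti-1}([x]_\tau)+L\tau\;=\;u_\ti([x]_\tau)+L\tau .
\]
Whenever $[x]_\tau\in\mathcal V(c)$, the hypothesis~\cref{math:thm:stability_requirement} applies at $[x]_\tau$ and yields $u_\ti([x]_\tau)<-L\tau$, hence $\dot V(x)<-L\tau+L\tau=0$. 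The $-L\tau$ margin in~\cref{math:thm:stability_requirement} is chosen precisely for this purpose: it exactly absorbs the $L\tau$ discretization slack from~\cref{thm:discretization}, which is what makes the estimate close. So at the level of a proof sketch the statement is essentially a direct combination of~\cref{thm:lyapunov_stability} and~\cref{thm:discretization}.

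Two points then require attention, and I expect the second to be the only place a careful argument has to do extra work. First, near the origin $\partial V/\partial x$ vanishes, so $u_\ti$ cannot fall below $-L\tau$ in a neighbourhood of~$0$; one therefore leans on~\cref{as:lyapunov_function_stable_f}, which certifies $\dot V<0$ deterministically on $\mathcal S_0=\mathcal V(c_0)$. Accordingly~\cref{math:thm:stability_requirement} is only needed (and only meaningful) on $\mathcal V(c)\setminus\mathcal V(c_0)$, and a standard Lyapunov argument glues the two regions: $\dot V<0$ on $\{V(x)=c\}$ makes $\mathcal V(c)$ forward invariant, $V$ strictly decreases along any trajectory while it is outside $\mathcal V(c_0)$, so the trajectory enters $\mathcal V(c_0)$ in finite time, after which~\cref{as:lyapunov_function_stable_f} delivers asymptotic stability of the origin. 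Second, the chain above needs $[x]_\tau\in\mathcal V(c)$, which can fail for $x$ near $\partial\mathcal V(c)$, since its nearest grid point may sit just outside, with $V([x]_\tau)\le c+L_V\tau/2$. The clean remedy is to impose~\cref{math:thm:stability_requirement} also at the grid points within distance $\tau/2$ of $\mathcal V(c)$, i.e.\ on $\{x\in\mathcal X_\tau : V(x)\le c+L_V\tau/2\}$, or to retreat to a marginally smaller sublevel set; with $\tau$ small this changes nothing of substance. Modulo this bookkeeping, taking the union over $x\in\mathcal V(c)$ yields $\dot V<0$ throughout $\mathcal V(c)$ on the event of probability at least $1-\delta$, and~\cref{thm:lyapunov_stability} concludes.
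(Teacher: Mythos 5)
Your proof is correct and takes essentially the same route as the paper, whose proof of~\cref{thm:stability_requirement_in_X} is just a one-line citation of~\cref{thm:confidence_interval,thm:discretization,thm:lyapunov_stability}: on the $1-\delta$ event of~\cref{thm:discretization} the $-L\tau$ margin in~\cref{math:thm:stability_requirement} exactly absorbs the $L\tau$ discretization slack, giving $\dot V<0$ on $\mathcal{V}(c)$ and hence the claim via~\cref{thm:lyapunov_stability}, with no extra union bound. The two refinements you add go beyond what the paper writes but are sensible: the origin issue concerns satisfiability of the hypothesis rather than the validity of the implication (the paper sidesteps it in~\cref{thm:roa_in_S} by taking $\mathcal{S}_\ti=\mathcal{S}_0\cup\mathcal{V}(c_\ti)$ with $\mathcal{S}_0=\mathcal{V}(c_0)$ safe by~\cref{as:lyapunov_function_stable_f}), while the boundary case where $[x]_\tau\notin\mathcal{V}(c)$ is a genuine bookkeeping gap the paper glosses over and your inflated grid set (or slightly shrunken level set) patches it correctly.
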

\begin{proof}
This is a consequence of~\cref{thm:confidence_interval,thm:discretization,thm:lyapunov_stability}.
\end{proof}
Given the previous results, after~${(\ti - 1)}$ evaluations of~$\gctr$, it suffices to maximize~$c$ such that the condition in~\cref{thm:stability_requirement_in_X} holds for all discretized states in~${\mathcal{V}(c)}$:
\begin{theorem}
\label{thm:roa_in_S}
Under the assumptions of~\cref{thm:stability_requirement_in_X}, pick
\begin{equation}
c_\ti = \argmax_{c \in \mathbb{R}_{>0}}\, c, \quad \textnormal{subject to~\cref{math:thm:stability_requirement} for all } x \in \mathcal{V}(c) \cap \mathcal{X}_\tau.
\label{eq:linesearch}
\end{equation}
Then,~${\mathcal{S}_\ti = \mathcal{S}_0} \cup \mathcal{V}(c_\ti)$ is contained within the ROA of~\cref{eq:dynamics_ode} for all~${\ti \geq 1}$ with probability at least~${(1 - \delta)}$.
\end{theorem}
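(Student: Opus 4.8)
The plan is to read off \cref{thm:roa_in_S} as a direct consequence of the per-experiment guarantee in \cref{thm:stability_requirement_in_X}, the deterministic guarantee in \cref{as:lyapunov_function_stable_f}, and the fact that the level sets of $V$ are totally ordered by inclusion. First I would check that the line search \cref{eq:linesearch} is well posed: its feasible set is non-empty, since for $c$ small enough $\mathcal{V}(c) \cap \mathcal{X}_\tau$ imposes no active constraint (in the limit it is empty), so $c_\ti$ is well defined. By construction $c_\ti$ satisfies \cref{math:thm:stability_requirement} for every $x \in \mathcal{V}(c_\ti) \cap \mathcal{X}_\tau$, and since the whole analysis is run with $\beta_\ti$ as in \cref{thm:confidence_interval}, \cref{thm:stability_requirement_in_X} yields, on an event of probability at least $1-\delta$, that $\dot V(x) < 0$ on $\mathcal{V}(c_\ti)$, and hence — via \cref{thm:lyapunov_stability} applied to the sublevel set $\mathcal{V}(c_\ti)$ — that $\mathcal{V}(c_\ti)$ lies inside the ROA of \cref{eq:dynamics_ode}.

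Next I would invoke \cref{as:lyapunov_function_stable_f} (equivalently, \cref{thm:lyapunov_stability} at level $c_0$), which states deterministically that $\dot V(x) < 0$ on $\mathcal{S}_0 = \mathcal{V}(c_0)$, so $\mathcal{S}_0$ is contained in the ROA. On the same probability-$(1-\delta)$ event we therefore obtain that $\mathcal{S}_\ti = \mathcal{S}_0 \cup \mathcal{V}(c_\ti)$ is contained in the ROA. Because the sublevel sets of $V$ are nested, $\mathcal{S}_0 \cup \mathcal{V}(c_\ti) = \mathcal{V}(\max\{c_0, c_\ti\})$, so $\mathcal{S}_\ti$ is again a single level set of $V$; in particular the union never produces a non-invariant set, and if the GP model is still too uncertain the statement gracefully reduces to $\mathcal{S}_\ti = \mathcal{S}_0$.

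The last point to nail down is the quantifier ``for all $\ti \geq 1$''. I would stress that this is \emph{not} obtained by a union bound over $\ti$: the confidence statement of \cref{thm:confidence_interval}, and consequently those of \cref{thm:discretization} and \cref{thm:stability_requirement_in_X}, already hold uniformly in $\ti$ on a single event of probability at least $1-\delta$. Hence the containment of $\mathcal{S}_\ti$ in the ROA holds for every $\ti \geq 1$ simultaneously on that one event, which is exactly the claimed bound.

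The proof itself is short; the main thing to be careful about is the bookkeeping. I would make sure the same high-probability event from \cref{thm:confidence_interval} is threaded through every step, and that invoking \cref{thm:lyapunov_stability} is legitimate — i.e., that $\mathcal{V}(c_\ti)$ (more precisely, the connected component containing the origin) is a compact, forward-invariant sublevel set on which $\dot V < 0$ away from the origin, which is what turns ``$\dot V < 0$ on $\mathcal{V}(c_\ti)$'' into genuine containment in the region of attraction. Everything else is the elementary observation that a union of two subsets of the ROA is a subset of the ROA.
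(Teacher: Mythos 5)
Your argument is correct and follows essentially the same route as the paper's own (very brief) proof: $\mathcal{S}_0$ is in the ROA deterministically by \cref{as:lyapunov_function_stable_f}, while $\mathcal{V}(c_\ti)$ is in the ROA on the probability-$(1-\delta)$ event of \cref{thm:stability_requirement_in_X}, whose underlying confidence bounds from \cref{thm:confidence_interval} already hold uniformly over all $\ti \geq 1$, so no union bound is needed. Your added remarks on well-posedness of the line search and on nestedness of the level sets are fine elaborations rather than a different argument.
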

\begin{proof}
The set~$\mathcal{S}_0$ from~\cref{thm:deterministic_c0_stable} is contained in the ROA deterministically, while~${\mathcal{V}(c_\ti)}$ is contained with probability at least~${(1-\delta)}$ by~\cref{thm:stability_requirement_in_X}. The result follows.
\end{proof}

\cref{thm:roa_in_S} provides a way to compute an estimate of the ROA via an efficient binary search to solve the optimization problem~\cref{eq:linesearch}. The discretization in~\cref{thm:stability_requirement_in_X} allows us to consider high-probability confidence intervals on~$\mathcal{X}_\tau$ only, while we generalize using the Lipschitz constant from~\cref{thm:v_dot_lipschitz}. A finer discretization (smaller value of~$\tau$) decreases the conservativeness of the ROA estimate. This is different from the number of experiments,~$\ti$, which decreases the uncertainty in the model estimate as we obtain more data.
In particular, as more information about~$\gctr$ becomes available from measurements, the ROA increases beyond the initial, deterministic safe set in~\cref{thm:deterministic_c0_stable}.

\section{ACTIVE LEARNING}
\label{sec:active_learning}

\begin{figure*}[t]
\vspace{1.5mm}
\centering
\subcaptionbox{Initial safe set,~$\mathcal{S}_0$. \label{fig:1d_example_0}}
    {\includegraphics{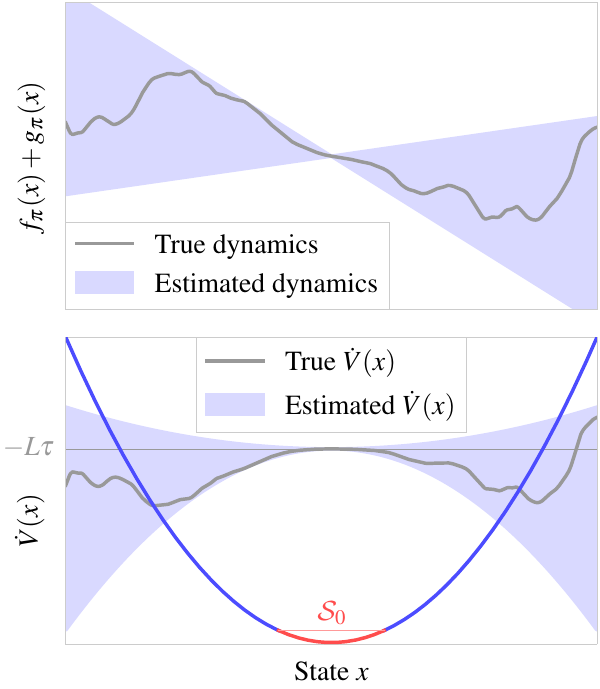}}
\subcaptionbox{Exploration within safe level set,~$\mathcal{S}_\ti$. \label{fig:1d_example_1}}
    {\includegraphics{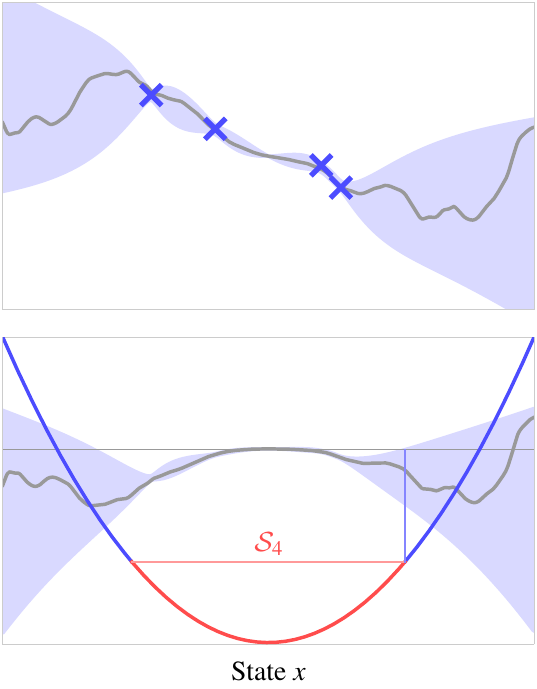}}
\subcaptionbox{Maximum level set found. \label{fig:1d_example_2}}
    {\includegraphics{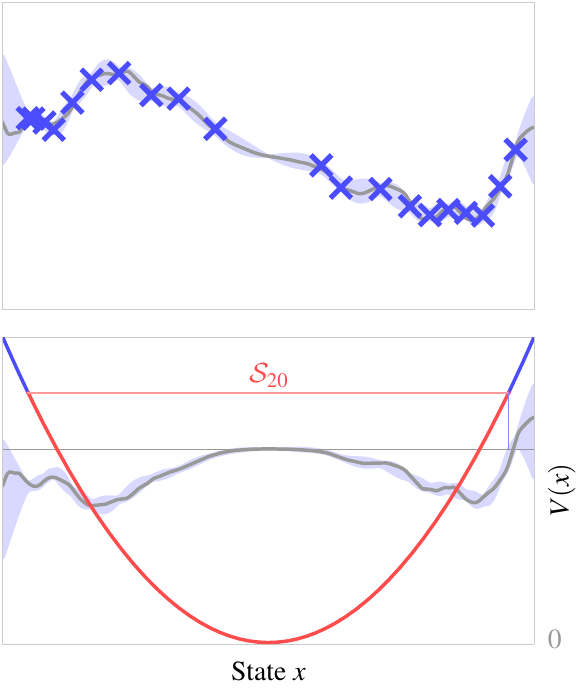}}
\caption{One-dimensional example of~\cref{alg:safe_exploration}. Initially, in~\cref{fig:1d_example_0}, the estimate of the dynamics is uncertain (top row, blue shade is the $2\sigma$ confidence interval). As a result, the ROA (red part of the Lyapunov function, bottom plot) consists only of the initial safe set,~$\mathcal{S}_0$; larger level sets of the Lyapunov function are unsafe (blue part). The algorithm actively selects new states at which to obtain measurements of the dynamics, which causes the safe set to increase in~\cref{fig:1d_example_1}. Eventually, this procedure leads to the largest safe level set in~\cref{fig:1d_example_2}. Only states inside the ROA are evaluated during the learning.}
\label{fig:1d_example}
\end{figure*}

\begin{algorithm}[t]
  \SetAlgoNoEnd
    \caption{Safe ROA exploration}
    \label{alg:safe_exploration}
    \DontPrintSemicolon
    \SetKwInOut{Input}{Inputs}
    \Input{Domain~$\mathcal{X}$ and discretization with~$\tau$,~$\mathcal{X}_\tau$ \newline
           GP prior $k(x, x')$ \newline
           Initial safe set~${\mathcal{S}_0 \subseteq \mathcal{X}}$}

    \For{$\ti = 1 , \dots $}{
    	$\begin{aligned} c_\ti \gets &\argmax_{c > 0}\,c, \,\,\, \textnormal{subject to} \\
    	&\quad u_\ti(x) < -L \tau,\, \textnormal{ for all } x \in \mathcal{V}(c)  \cap \mathcal{X}_\tau
    	\end{aligned}$ \;
		$\mathcal{S}_\ti \gets \mathcal{S}_0 \,\cup\, \mathcal{V}(c_\ti)$ \;
		$x_\ti \gets \argmax_{x \in \mathcal{S}_\ti} \, \sigma_{\ti-1}(x)$ \;
		Update GP with measurement of~$\hat{g}_\pi(x_\ti)$
    }
\end{algorithm}

In the previous section, we provided an estimate of the ROA based on a GP model of the unknown dynamics of~$\gctr(x)$ in~\cref{eq:dynamics_ode}, which was informed by arbitrarily selected measurements/experiments. In this section, we actively expand the ROA estimate by selecting new, safe states within the current ROA estimate at which to obtain measurements.

As we update the model with new data, we can compute the current estimate of the ROA,~$\mathcal{S}_\ti$, using~\cref{thm:roa_in_S}. In order to expand the ROA from the initial estimate~$\mathcal{S}_0$, we need to obtain measurements at states that are relevant for learning~$\gctr$. Here, we assume that we have access to a control method that drives us to desired states without leaving the ROA. Any method that applies the policy~$\pi(x)$ on the boundary of~$\mathcal{S}_n$ fulfills this safety requirement~\cite{Akametalu2014Reachability}. Based on ideas from safe Bayesian optimization~\cite{Sui2015Safe}, we use the uncertainty estimate of the GP. We select
\begin{equation}
x_\ti = \argmax_{x \in \mathcal{S}_\ti} \, \sigma_{\ti-1}(x)
\label{eq:acquisition_function}
\end{equation}
as the next state to evaluate according to~\cref{math:gp_prediction_variance} within a current estimate of the safe set,~$\mathcal{S}_\ti$. At~$x_\ti$, the GP model is the most uncertain about the unknown dynamics,~$\gctr(x)$. The idea behind this selection criterion is that we want to learn about the most uncertain state in order to increase the ROA estimate. By decreasing the uncertainty about the unknown dynamics,~$\gctr(x)$, over time, the safe region expands eventually.
We summarize the method in~\cref{alg:safe_exploration} and an example is shown in~\cref{fig:1d_example}.

It follows from~\cref{thm:roa_in_S} that every state chosen by~\cref{eq:acquisition_function} lies inside the ROA of~\cref{eq:dynamics_ode} with high probability. For a variant of~\cref{alg:safe_exploration} it is possible to prove an even stronger result: that the maximum, safe level set can be found to some accuracy (with probability at least~${(1-\delta)}$) after a finite number of experiments.
\extendedversion{The corresponding variant of~\cref{alg:safe_exploration} builds up the ROA estimate using the Lipschitz constant. More details are given in the appendix in~\cref{thm:full_exploration}.}
\conferenceversion{More details are given in the extended version of the paper at~\href{http://arxiv.org/abs/1603.04915}{http://arxiv.org/abs/1603.04915}.}

\section{Extension to Multiple Dimensions}
\label{sec:mtuliple_dimensions}

So far, we have considered a one-dimensional system in~\cref{eq:dynamics_ode}. The preceding analysis directly transfers to multiple dimensions, at the expense of more cumbersome notation. The only non-trivial parts of the extension are GP models with vector predictions and the choice of~$\beta_\ti$ in~\cref{thm:confidence_interval}. The main observation is that vector-valued functions can be modeled as a single GP over an extended state space,~${\mathcal{X} \times \mathcal{I}}$, where integer elements in~$\mathcal{I}$ index the output dimension of~$\gctr(x)$. At each iteration, we obtain~$q$ measurements of~$\gctr(x)$; one for each dimension of the state. As a result,~${\beta_{\ti'} = 2B_g + 300\gamma'_{\ti q} \log^3(\ti q / \delta)}$ increases at a faster rate of~${\ti'=\ti q}$ compared to~${\beta_\ti}$ in~\cref{thm:confidence_interval}. The information capacity,~$\gamma'$, is measured relative to the combined function over~${\mathcal{X} \times \mathcal{I}}$. Refer to~\cite{Berkenkamp2016Bayesian} for more details.

\section{Discussion}
\label{sec:discussion}

\cref{alg:safe_exploration} allows for the safe exploration of the ROA without leaving the ROA for a general, uncertain, nonlinear system. While the discretization in~\cref{thm:discretization} may be conservative and computing the predictions of the GP model of~$\dot{V}$ at the discretization points is computationally intensive, this needs to be compared to the generality of the statements we have made. In particular, the only assumption we have made about the unknown dynamics,~$\gctr(x)$ in~\cref{eq:dynamics_ode}, is~\cref{as:g_bounded_RKHS}, which is very general.

In the analysis, we have assumed that a method exists, which drives the system to any state~$x_\ti$ selected by~\cref{eq:acquisition_function}, without leaving the ROA. In practice, this must be further restricted by reachability properties, see~\cite{Akametalu2014Reachability}. These can be included as an additional constraint; that is, if~$\mathcal{R}$ is the safely reachable set, we select~$x_\ti$ from~$\mathcal{S}_\ti \cap \mathcal{R}$ in~\cref{eq:acquisition_function}.

\cref{alg:safe_exploration} can be made more data-efficient by only evaluating states close to the boundary of the level set, without loosing guarantees. See~\cite{Sui2015Safe} and~\cite{Berkenkamp2016Bayesian} for details.

A discrete-time variant of~\cref{alg:safe_exploration} that determines safety with~${V(f_{\pi,d}(x) + g_{\pi,d}(x)) - V(x) < 0}$ only requires measurements of the states,~$x$, rather than derivatives,~$\dot{x}$, in~\cref{as:measurements_of_deriv}. While mapping a GP model of~$\gctr(x)$ through a nonlinear Lyapunov function renders this method not analytically tractable, methods based on approximate uncertainty propagation may work well in practice.


\section{Experiments}

In this section. we evaluate~\cref{alg:safe_exploration} in a simulated experiment. We only provide a high-level overview of the experiment. For details refer to the documentation and source code at~\url{http://github.com/befelix/lyapunov-learning}.

We consider an inverted pendulum with angle~$\theta$, mass~${m=\unit[0.15]{kg}}$, length~${l=\unit[0.5]{m}}$, and friction coefficient~${\mu=\unit[0.05]{Nms/rad}}$. The dynamics are given by
\begin{equation}
\ddot{\theta}(t) = \frac{mgl \sin{\theta(t)} -\mu \dot{\theta}(t) + u(t)}{ml^2},
\label{eq:pendulum_ode}
\end{equation}
where~$u(t)$ is the torque applied to the pendulum. The torque is limited so that the real system cannot recover from states with~${|\theta| > \unit[30]{deg}}$. The state is~${x=(\theta, \dot{\theta})}$. We assume that we only know a linear approximation of the dynamics in~\cref{eq:pendulum_ode} for the upright equilibrium point, where additionally friction is neglected and the mass is~\unit[0.05]{kg} lighter. We use a Linear Quadratic Regulator based on this linear approximation in order to control the origin,~${x = 0}$, and use the corresponding quadratic Lyapunov function to determine the ROA of~\cref{eq:pendulum_ode}.

We model the error in~\cref{eq:pendulum_ode} as a GP, see~\cref{sec:gaussian_process}. In particular, we use the product of a linear and a Mat{\'e}rn kernel for the GP model, which encodes nonlinear functions that are two-times differentiable and take values that are bounded by a linear function from above and below with high probability. Details about kernel choice for dynamic systems can be found in~\cite{Berkenkamp2016Kernel} and a one-dimensional sample function of the kernel used here is in the upper plot of~\cref{fig:1d_example_0}. In practice, the more assumptions are made about the model error, the faster the learning algorithm will converge~\cite{Berkenkamp2016Kernel}.

In the analysis, we assumed~$\gctr$ to have bounded RKHS norm in~\cref{as:g_bounded_RKHS}. Here, we model~$\gctr$ as a sample function of the GP and set~${\beta_\ti^{1/2}=2}$ for all iterations.
We use~${\tau=0.002}$ and high-probability, local Lipschitz constants encoded by the kernel with~\cref{thm:v_dot_lipschitz}.
The initial safe set is~${ \mathcal{S}_0 = \{ (\theta, \dot{\theta}) \in \mathbb{R}^2 \,|\, |\theta| \leq 5\deg, \, |\dot{\theta}| \leq 10\deg / \mathrm{s} \} }$.

The results can be seen in~\cref{fig:inverted_pendulum}. While the prior model with the wrong mass and friction parameters estimates a safe set that is too large (includes unsafe states), \cref{alg:safe_exploration} provides a conservative estimate. As we gain more data about the dynamics and if we discretize with smaller values of~$\tau$, the estimate improves and, in the limit, converges to the true level set. Overall, \cref{alg:safe_exploration} provides a powerful tool to learn the ROA of general nonlinear systems from experiments, without leaving the safe region encoded by the Lyapunov function.

\begin{figure}
\vspace{1.5mm}
\includegraphics[scale=1]{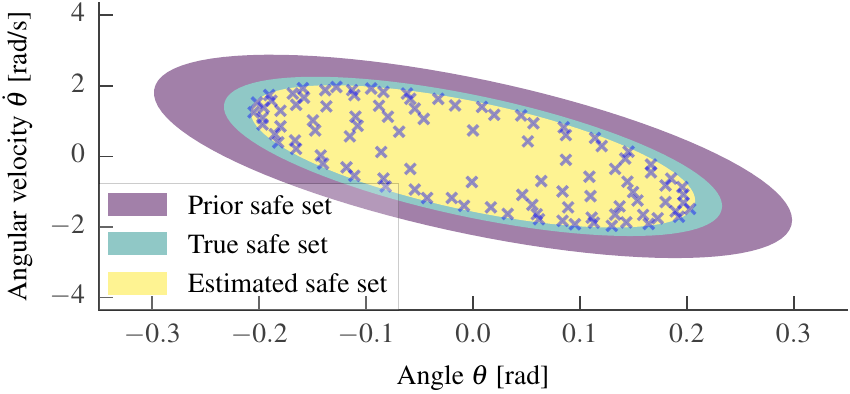}
\caption{Prior, true, and estimated ROA level sets after 100~data points (blues crosses). The prior model estimates a safe region (purple) that is larger than the true ROA (green), thus it includes unsafe states. In contrast, \cref{alg:safe_exploration} provides a conservative estimate (yellow), since it considers states with~${\dot{V}(x) \leq -L\tau}$, rather than~${\dot{V}(x) \leq 0}$. The level set could be increased by discretizing the space with a smaller value~of~$\tau$.}
\label{fig:inverted_pendulum}
\end{figure}


\section{Conclusion}

We introduced a learning algorithm that, based on an initial, approximate model and a corresponding Lyapunov function, is able to learn about the real ROA through experiments on the real system, without leaving the true ROA with high probability. While some of the assumptions in~\cref{sec:prob_def} may be restrictive for practical application, the results in this paper should be understood as a theoretical foundation for learning algorithms that learn without risking instability.

\bibliographystyle{IEEEtran}
\bibliography{root.bib}

\extendedversion{

\section*{Appendix}

\subsection{Full Exploration Proof}

We consider a variant of~\cref{alg:safe_exploration} that determines the safe set,~$\mathcal{S}_\ti$, starting from~$\mathcal{S}_0$ in~\cref{thm:deterministic_c0_stable}, using the Lipschitz constant directly. This allows us to prove exploration guarantees.

First, let us consider the best ROA approximation that we could hope to achieve. We have a probabilistic model of~$\gctr(x)$, which means we cannot expect to explore the full ROA. Instead, we consider learning the true ROA up to some error,~$\epsilon$. An algorithm with knowledge up to this error, could use the Lipschitz constant of~$\dot{V}$ from~\cref{thm:v_dot_lipschitz} in order to learn which discrete states close to~$\mathcal{S}_0$ fulfill the requirement of~$\dot{V}(x) < -L \tau$ from~\cref{thm:stability_requirement_in_X}. We define a general operator on a set that determines these states,
\begin{align}
R_{\epsilon, \dot{V}}(\mathcal{S}) = \mathcal{S} \cup \{ x \in \mathcal{X}_\tau \,|\, &\exists x' \in R_l(\mathcal{S}) \colon  \label{eq:baseline_safe_operator}\\
 &\dot{V}(x) + \epsilon + L \|x - x'\|  < -L \tau\}, \notag
\end{align}
where~$R_l(\mathcal{S})$ is an operator that selects the maximum level set of the Lyapunov function contained in~$\mathcal{S}$,
\begin{equation}
R_l(\mathcal{S}) = \mathcal{V}\left(
\argmax_{c>0}\, c, \quad \textnormal{subject to } \mathcal{V}(c) \cap \mathcal{X}_\tau \subseteq \mathcal{S}
\right).
\label{eq:level_set_operator}
\end{equation}
Thus,~\cref{eq:baseline_safe_operator} adds a state~$x \in \mathcal{X}_\tau$ to $\mathcal{S}$ if there exists a state~$x'$ in the current level set estimate of the ROA that can be used in order to determine that~${\dot{V}(x) < -L \tau}$ via the Lipschitz constant. If we apply this operator iteratively, in the limit we reach the maximum ROA that can be determined using knowledge up to~$\epsilon$. That is, with~$R_{\epsilon, \dot{V}}^i(\mathcal{S}) = R_{\epsilon, \dot{V}}(R_{\epsilon, \dot{V}}^{i-1}(\mathcal{S}))$ and $R_{\epsilon, \dot{V}}^1(\mathcal{S}) = R_{\epsilon, \dot{V}}(\mathcal{S})$, the maximum safe set that could be determined is
\begin{equation}
R_l(\overline{R}_{\epsilon, \dot{V}}(\mathcal{S}_0)), \textnormal{~~where~~} \overline{R}_{\epsilon, \dot{V}}(\mathcal{S}) = \lim_{i \to \infty} R^i_{\epsilon, \dot{V}}(\mathcal{S}).
\end{equation}

With the baseline defined, in the following we provide an algorithm that achieves this baseline using the results from~\cref{sec:lyapunov}. Instead of defining the ROA directly in terms of the GP confidence interval of~$\dot{V}$,~${\mathcal{Q}_{\ti, \dot{V}}(x) = [l_\ti(x),\, u_\ti(x)]}$ as in~\cref{thm:roa_in_S}, we consider the intersection of these intervals, ${\mathcal{C}_{\ti}(x) = \mathcal{Q}_{\ti,\dot{V}}(x) \cap \mathcal{C}_{\ti-1}(x) }$. We initialize these intervals so that points in~$\mathcal{S}_0$ are safe; that is,~${\mathcal{C}_0(x) = [-\infty,\, 0)}$ if~${x \in \mathcal{S}_0}$ and~${[-\infty,\, \infty]}$ otherwise. As a consequence of~\cref{thm:confidence_interval}, we have that~${\dot{V}(x) \in \mathcal{C}_\ti(x)}$ for all~${\ti \geq 1}$ with probability at least~${(1 - \delta)}$. Besides ensuring that points in~$\mathcal{S}_0$ are always considered safe, this definition guarantees that the estimated safe set will not decrease over time. We define~${l_{\ti, c}(x) = \min_{x \in \mathcal{X}} \mathcal{C}_\ti(x) }$ and ${u_{\ti, c}(x) = \max_{x \in \mathcal{X}} \mathcal{C}_\ti(x) }$.

Based on these confidence intervals, after~${(\ti-1)}$ measurements we can quantify the following data points as having a value~$\dot{V}(x)$ that is sufficiently negative,
\begin{align}
\mathcal{S}_{\ti, \dot{V}} = \mathcal{S}_0 \cup \{ x \in \mathcal{X} \,|\, &\exists x' \in R_l(S_{\ti - 1, \dot{V}}) \colon
\label{eq:gp_safe_set_definition}\\
&u_{\ti, c}(x') + L \|x - x' \| < -L\tau \}, \notag
\end{align}
and estimate the ROA based on this set as
\begin{equation}
\mathcal{S}_\ti = R_l(\mathcal{S}_{\ti, \dot{V}}).
\label{eq:lipschitz_safe_set}
\end{equation}
Intuitively this selection criterion is similar to the one of the baseline in~\cref{eq:baseline_safe_operator}, except that~\cref{eq:gp_safe_set_definition} uses the GP confidence intervals instead of perfect model knowledge. This allows us to prove that we explore the maximum level set up to~$\epsilon$ accuracy:
\begin{theorem}
\label{thm:full_exploration}
Under the assumptions of~\cref{thm:roa_in_S}, choose~$\beta_\ti$ as in~\cref{thm:confidence_interval} and let~$\ti^*$ be the smallest positive integer so that
\begin{equation}
\frac{\ti^*}{\beta_{\ti^*} \gamma_{\ti^*}} \geq \frac{C L_{\partial V}^2 (| \overline{R}_{0, \dot{V}}(\mathcal{S}_0) | + 1)}{\epsilon^2},
\end{equation}
where~${C = 8 / \log(1 + \sigma^{-2})}$. For any~${\epsilon > 0}$, and~${\delta \in (0,\,1)}$, under the selection criterion~\cref{eq:acquisition_function} within~$\mathcal{S}_\ti$ from~\cref{eq:lipschitz_safe_set}, the following holds jointly with probability at least~${(1-\delta)}$:
\begin{enumerate}[(i)]
\item $\mathcal{S}_\ti$ is contained in the ROA of~\cref{eq:dynamics_ode} for all~${\ti \geq 1}$,
\item $R_l(\overline{R}_{\epsilon, \dot{V}}(\mathcal{S}_0)) \subseteq \mathcal{S}_{\ti^*} \subseteq R_l(\overline{R}_{0, \dot{V}}(\mathcal{S}_0))$.
\end{enumerate}
\end{theorem}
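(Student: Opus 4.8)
\emph{Plan.} Throughout I would condition on the event of \cref{thm:confidence_interval} (probability at least $1-\delta$), on which $\dot{V}(x)\in\mathcal{C}_\ti(x)$, hence $l_{\ti,c}(x)\le\dot{V}(x)\le u_{\ti,c}(x)$, for all $x$ and all $\ti\ge1$; since the $\mathcal{C}_\ti$ are nested by construction, $\mathcal{S}_{\ti,\dot{V}}$ in \cref{eq:gp_safe_set_definition} and $\mathcal{S}_\ti$ in \cref{eq:lipschitz_safe_set} are nondecreasing in $\ti$. I would then establish, in order: claim (i); the upper containment $\mathcal{S}_{\ti^*}\subseteq R_l(\overline{R}_{0,\dot{V}}(\mathcal{S}_0))$; and the lower containment $R_l(\overline{R}_{\epsilon,\dot{V}}(\mathcal{S}_0))\subseteq\mathcal{S}_{\ti^*}$, the last being the crux.

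\emph{Claim (i) and upper containment.} For (i), the argument repeats \cref{thm:roa_in_S}: any $x\in\mathcal{S}_{\ti,\dot{V}}\setminus\mathcal{S}_0$ has a witness $x'\in R_l(\mathcal{S}_{\ti-1,\dot{V}})$ with $u_{\ti,c}(x')+L\|x-x'\|<-L\tau$, so $\dot{V}(x)\le\dot{V}(x')+L\|x-x'\|\le u_{\ti,c}(x')+L\|x-x'\|<-L\tau$ by the Lipschitz bound of \cref{thm:v_dot_lipschitz}; since $\mathcal{S}_\ti=R_l(\mathcal{S}_{\ti,\dot{V}})=\mathcal{V}(c_\ti)$ has all of its grid points in $\mathcal{S}_{\ti,\dot{V}}$, \cref{thm:discretization} upgrades this to $\dot{V}<0$ on $\mathcal{V}(c_\ti)$ exactly as in \cref{thm:stability_requirement_in_X}, and together with $\mathcal{S}_0\subseteq\mathcal{S}_\ti$ being contained in the ROA deterministically (\cref{thm:deterministic_c0_stable}), \cref{thm:lyapunov_stability} shows $\mathcal{S}_\ti$ is contained in the ROA for all $\ti$. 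For the upper containment I would induct on $\ti$ to show $\mathcal{S}_{\ti,\dot{V}}\subseteq\overline{R}_{0,\dot{V}}(\mathcal{S}_0)$ on the grid: a state added at step $\ti$ with witness $x'$ satisfies $\dot{V}(x')+L\|x-x'\|\le u_{\ti,c}(x')+L\|x-x'\|<-L\tau$, which is precisely the certification condition of the perfect-knowledge operator $R_{0,\dot{V}}$ in \cref{eq:baseline_safe_operator} relative to $x'\in R_l(\mathcal{S}_{\ti-1,\dot{V}})\subseteq R_l(\overline{R}_{0,\dot{V}}(\mathcal{S}_0))$, and since $\overline{R}_{0,\dot{V}}(\mathcal{S}_0)$ is a fixed point of $R_{0,\dot{V}}$ the state lies in it; applying the monotone operator $R_l$ of \cref{eq:level_set_operator} then gives $\mathcal{S}_\ti\subseteq R_l(\overline{R}_{0,\dot{V}}(\mathcal{S}_0))$.

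\emph{Lower containment.} Since $\mathcal{S}_{\ti,\dot{V}}$ is grid-supported, nondecreasing, and contained in $\overline{R}_{0,\dot{V}}(\mathcal{S}_0)$, it can strictly enlarge at most $|\overline{R}_{0,\dot{V}}(\mathcal{S}_0)|$ times; I would partition $\{1,\dots,\ti^*\}$ into at most $|\overline{R}_{0,\dot{V}}(\mathcal{S}_0)|+1$ epochs on which $\mathcal{S}_\ti$, and hence the acquisition set in \cref{eq:acquisition_function}, is constant. On each epoch $x_\ti=\argmax_{x\in\mathcal{S}_\ti}\sigma_{\ti-1}(x)$ resamples the most uncertain state, so the standard information-gain argument (Cauchy--Schwarz together with $\sum_{\ti}\sigma_{\ti-1}^2(x_\ti)\le\tfrac{2\gamma_\ti}{\log(1+\sigma^{-2})}$, cf.\ \cite{Srinivas2010Gaussian}), combined with $\sigma_{\dot{V},\ti-1}(x)=|\partial V(x)/\partial x|\,\sigma_{\ti-1}(x)$ from \cref{eq:v_dot_var_prediction} and $\sigma_{\ti-1}(x)\le\sigma_{\ti-1}(x_\ti)$, shows that within $\lceil C\,L_{\partial V}^2\beta_{\ti^*}\gamma_{\ti^*}/\epsilon^2\rceil$ steps ($C=8/\log(1+\sigma^{-2})$) there is a time $\ti$ at which $2\beta_\ti^{1/2}\sigma_{\dot{V},\ti-1}(x)\le\epsilon$ for \emph{every} $x\in\mathcal{S}_\ti$. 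At that time the width of $\mathcal{C}_\ti(x')$ is at most $\epsilon$ for every $x'$ in the current level set, so $u_{\ti,c}(x')\le\dot{V}(x')+\epsilon$, and therefore $\mathcal{S}_{\ti+1,\dot{V}}$ contains every state that $R_{\epsilon,\dot{V}}$ certifies from that level set; hence the epoch must end unless $\mathcal{S}_\ti$ has already attained $R_l(\overline{R}_{\epsilon,\dot{V}}(\mathcal{S}_0))$. Summing the at most $|\overline{R}_{0,\dot{V}}(\mathcal{S}_0)|+1$ epoch lengths and comparing with the defining inequality for $\ti^*$ forces $R_l(\overline{R}_{\epsilon,\dot{V}}(\mathcal{S}_0))\subseteq\mathcal{S}_{\ti^*}$.

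\emph{Main obstacle.} The hard part is the last paragraph: making ``the epoch must end'' rigorous through the $R_l(\cdot)$ wrapper --- i.e.\ monotonicity and idempotence of $R_l$ on sublevel sets, and the fact that a strictly larger $\mathcal{S}_{\ti+1,\dot{V}}$ yields a strictly larger level set $\mathcal{S}_{\ti+1}$ unless the $\epsilon$-baseline is already reached --- and matching constants so that (number of epochs)\,$\times$\,(per-epoch length) is exactly the stated threshold on $\ti^*$ with $C=8/\log(1+\sigma^{-2})$. This mirrors the convergence analyses for safe Bayesian optimization in \cite{Sui2015Safe} and \cite{Berkenkamp2016Bayesian}.
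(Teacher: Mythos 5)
Your proposal follows essentially the same route as the paper's proof, which simply delegates the key steps---boundedness of the posterior uncertainty under maximum-variance sampling within a fixed safe set, the resulting expansion matching the $\epsilon$-baseline, and the bound of $|\overline{R}_{0,\dot{V}}(\mathcal{S}_0)|+1$ epochs---to Lemmas 5 and 7 and Corollary 4 of \cite{Sui2015Safe}, together with \cref{thm:roa_in_S} for part (i) and the scaling $\sigma_{\dot{V},\ti-1}(x)\leq L_{\partial V}\sigma_{\ti-1}(x)$. Your sketch correctly unpacks that cited machinery (epochs, information-gain bound with $C=8/\log(1+\sigma^{-2})$, comparison with the $\epsilon$- and $0$-baselines), so no gap beyond the bookkeeping you already identified.
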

\begin{proof}
Statement~\textit{(i)} follows from~\cref{thm:roa_in_S}. For \textit{(ii)}, we have from \cite[Lemma 5]{Sui2015Safe} that if the safe set does not increase, then the maximum uncertainty of~$\gctr$ and~$\dot{V}$ within the safe set is bounded by~${ \epsilon / \| \partial V / \partial x \|_\infty = \epsilon / L_{\partial V} }$ and~$\epsilon$, respectively, after a finite number of iterations. At that point, either the safe set increases similarly to the baseline, or we have explored the full safe set~\cite[Lemma 7]{Sui2015Safe}. Applying this ${\left(|\overline{R}_{0, \dot{V}}(\mathcal{S}_0) | + 1\right)}$ times provides the result~\cite[Cor. 4]{Sui2015Safe}.
\end{proof}

That is, after a finite number of evaluations,~$\ti^*$, we explore at least as much as the baseline up to accuracy~$\epsilon$, but not more than we could determine as safe if we knew the function perfectly; that is, the baseline with~$\epsilon=0$.
}

\end{document}